\tikzstyle{input}           = [inner sep=1pt]
\tikzstyle{gate}            = [circle,fill=white,draw=black,minimum
\tikzstyle{outgate}         = [gate, thick]
\tikzstyle{wire}            = [draw,->]
\tikzstyle{notwire}         = [draw,->,dashed]
\DeclareMathOperator{\SUM}{SUM}
\DeclareMathOperator{\AtLeast}{AtLeast}
\DeclareMathOperator{\AtMost}{AtMost}
\title{Simplifier: A~New Tool for Boolean Circuit Simplification}
\author{Daniil Averkov}{St.~Petersburg State University}{averkov.dan@gmail.com}{}{}
\author{Gregory Emdin}{École Polytechnique Fédérale de Lausanne}{grigorii.emdin@epfl.ch}{}{}
\author{Viktoriia Krivogornitsyna}{Steklov Mathematical Institute at St.~Petersburg, Russian Academy of Sciences}{krivogornitsyna.va@gmail.com}{https://orcid.org/0009-0003-0702-9973}{}
\author{Alexander~S. Kulikov}{JetBrains Research\and \url{https://alexanderskulikov.github.io}}{alexander.s.kulikov@gmail.com}{https://orcid.org/0000-0002-5656-0336}{}
\author{Fedor Kurmazov}{Steklov Mathematical Institute at St.~Petersburg, Russian Academy of Sciences}{f.kurmazov.b@gmail.com}{https://orcid.org/0009-0004-2072-3457}{}
\author{Alexander Smal}{JetBrains Research}{avsmal@gmail.com}{https://orcid.org/0000-0002-8241-5503}{}
\author{Vsevolod Vaskin}{Neapolis University Pafos}{vaskin.2003@gmail.com}{https://orcid.org/0009-0006-5187-8384}{}
\authorrunning{D.~Averkov, G.~Emdin, V. Krivogornitsyna, A.~Kulikov, F.~Kurmazov, A.~Smal, V.~Vaskin}
\keywords{Boolean circuit, Boolean function, synthesis, verification, satisfiability, SAT, simplification, tool} 
\begin{document}
\sloppy

\maketitle

\begin{abstract}
    The Boolean circuit simplification problem involves finding a smaller circuit that computes the same function as~a~given Boolean circuit. This problem is closely related to several key areas with both theoretical and practical applications, such as~logic synthesis, satisfiability, and verification.

    In this paper, we present \texttt{Simplifier}, a~new open source tool for simplifying Boolean circuits. The tool optimizes subcircuits with three inputs and at most three outputs, seeking to improve each one. It is designed as a~low-effort method that runs in~just a~few seconds for circuits of~reasonable size. This efficiency is~achieved by combining two key strategies. First, the tool utilizes a~precomputed database of~optimized circuits, generated with SAT solvers after carefully clustering Boolean functions with three inputs and up~to three outputs. Second, we~demonstrate that it~is sufficient to~check
    a~linear number of~subcircuits, relative to~the size of~the original circuit. This allows a~single iteration of~the tool to~be executed in~linear time.

    We evaluated the tool on a~wide range of Boolean circuits, including both industrial and hand-crafted examples, in two popular formats: AIG and BENCH. For AIG circuits, after applying the state-of-the-art \texttt{ABC} framework, our tool achieved an additional 4\% average reduction in~size. For BENCH circuits, the tool reduced their size by~an~average of 30\%.
\end{abstract}


\section{Overview}

Boolean circuits as a~way to~represent Boolean functions appears in~many
applications, both in~theory and in~practice.
In~most cases, circuits are succinct and efficient descriptions
of~Boolean functions as~the size of a~circuit might be~much smaller than the size of
its truth-table (that is always $2^n$ for an $n$-variate Boolean function) or~a~CNF (that may also have exponential size).
For example,
the minimum number of~clauses in a~CNF formula that computes
the parity function $x_1 \oplus \dotsb \oplus x_n$
is~$2^{n-1}$, whereas this function can be~computed
by~a~circuit of~size~$n-1$ over the full binary basis $B_2$ and
a~circuit of~size $3(n-1)$ over the basis $\{\neg, \lor\}$ if one
allows negations on~the wires (see \Cref{fig:parity-5}).

\begin{figure}[ht]
\centering
\begin{tikzpicture}[xscale=0.8]
\foreach \a in {0,1} {
\foreach \b in {0,1} {
\foreach \c in {0,1} {
\foreach \d in {0,1} {
\foreach \e in {0,1} {
\draw let \n1={int(mod(\a+\b+\c+\d+\e,2))} in
	node[label=above:\rotatebox{90}{\tt \a\b\c\d\e}, minimum size=4mm, inner sep=0pt, draw, rectangle]
	at (8*\a + 4*\b + 2*\c + \d + 0.5*\e,0) {\tt \n1};
}}}}}

\node at (8, 1.7) {truth-table:};
\end{tikzpicture}
\vspace{3mm}

\begin{tikzpicture}
    \node[above] at (0,0) {CNF formula:};
    \node[text width=12cm, below] at (0, 0) {$({x_1} \lor {x_2} \lor {x_3} \lor {x_4} \lor {x_5}) \land({x_1} \lor {x_2} \lor {x_3} \lor \overline{x_4} \lor \overline{x_5}) \land({x_1} \lor {x_2} \lor \overline{x_3} \lor {x_4} \lor \overline{x_5}) \land({x_1} \lor {x_2} \lor \overline{x_3} \lor \overline{x_4} \lor {x_5}) \land({x_1} \lor \overline{x_2} \lor {x_3} \lor {x_4} \lor \overline{x_5}) \land({x_1} \lor \overline{x_2} \lor {x_3} \lor \overline{x_4} \lor {x_5}) \land({x_1} \lor \overline{x_2} \lor \overline{x_3} \lor {x_4} \lor {x_5}) \land({x_1} \lor \overline{x_2} \lor \overline{x_3} \lor \overline{x_4} \lor \overline{x_5}) \land(\overline{x_1} \lor {x_2} \lor {x_3} \lor {x_4} \lor \overline{x_5}) \land(\overline{x_1} \lor {x_2} \lor {x_3} \lor \overline{x_4} \lor {x_5}) \land(\overline{x_1} \lor {x_2} \lor \overline{x_3} \lor {x_4} \lor {x_5}) \land(\overline{x_1} \lor {x_2} \lor \overline{x_3} \lor \overline{x_4} \lor \overline{x_5}) \land(\overline{x_1} \lor \overline{x_2} \lor {x_3} \lor {x_4} \lor {x_5}) \land(\overline{x_1} \lor \overline{x_2} \lor {x_3} \lor \overline{x_4} \lor \overline{x_5}) \land(\overline{x_1} \lor \overline{x_2} \lor \overline{x_3} \lor {x_4} \lor \overline{x_5}) \land(\overline{x_1} \lor \overline{x_2} \lor \overline{x_3} \lor \overline{x_4} \lor {x_5})$};
\end{tikzpicture}

\begin{tikzpicture}[xscale=0.8, yscale=0.8]
\begin{scope}
\foreach \v/\x in {1,...,4}
	\node[input] (x\v) at (\x-1,3) {$x_\v$};

\node[input] (x5) at (4, 3) {$x_5$};

\foreach \v/\x/\y in { a/0.5/2, b/2.5/2, c/1.5/1}
	\node[gate] (\v) at (\x,\y) {$\oplus$};

\node[outgate] (d) at (2.5,0) {$\oplus$};

\foreach \s/\t in {x1/a, x2/a, x3/b, x4/b, a/c, b/c, c/d, x5/d}
    \draw[wire] (\s) -- (\t);

\node at (2,3.7) {circuit over~$B_2$:};
\end{scope}

\begin{scope}[shift={(8,-1.2)},yscale=0.7]
\foreach \v/\x in {1,...,4}
	\node[input] (x\v) at (\x-1,6) {$x_\v$};

\node[input] (x5) at (4, 6) {$x_5$};

\foreach \v/\x/\y/\a/\b in {a/0/4/x1/x2, b/2/4/x3/x4, c/1/2/a3/b3, d/2/0/c3/x5}
{
	\node[gate] (\v1) at (\x,\y+1) {$\land$};
	\node[gate] (\v2) at (\x+1,\y+1) {$\land$};
	\node[gate] (\v3) at (\x+.5,\y) {$\land$};
	\foreach \s/\t in {\v1/\v3, \v2/\v3, \a/\v2, \b/\v2}
    	\draw[notwire] (\s) -- (\t);
	\foreach \s/\t in {\a/\v1, \b/\v1}
    	\draw[wire] (\s) -- (\t);
};

\node[outgate] (d) at (2.5,0) {$\land$};

\node at (2,7) {circuit over~$\{\land, \neg\}$:};
\end{scope}
\end{tikzpicture}

\caption{Four representations of~the function $x_1 \oplus x_2 \oplus x_3 \oplus x_4 \oplus x_5$: a~truth table,
    a~CNF, a~circuit over the full binary basis~$B_2$ of~size~$4$, and     a~circuit over~$\{\neg, \land\}$ of~size~$12$ (for the circuits, the dashed wires are negated, the output gate is~shown in~bold).
    Each~of the shown representations is~known to~be optimal with respect
    to~its size.}\label{fig:parity-5}
\end{figure}

\subsection{Circuit Synthesis}

In~the \emph{circuit synthesis} problem,
one is~given a~specification of~a~Boolean function
and needs to~design a~small circuit computing this function.
The output of~computer-aided design systems (that solve this problem)
is often used in the manufacture of integrated circuits.
The smaller the circuit, the easier and cheaper it is to produce: it has less functional elements, requires less wire crossings, etc.
Whereas in~practice various heuristic approaches are used,
the circuit synthesis problem is also important from
the theoretical point of~view. There, it is known
as the \emph{minimum circuit size problem (MCSP)}:
given a~truth-table of~a~function, output a~circuit of the minimum size
computing this function.
It is one of the central problems studied in meta-complexity theory~\cite{kabanets2000circuit}.
MCSP is conjectured to not have a polynomial-time solution (note that we measure its complexity with respect to the truth-table size, i.e., it is not solvable in time $2^{O(n)}$).
At the same time it is still an open question whether the general case of MCSP is NP-complete. Some researchers suggest
that it might be NP-intermediate (neither in P nor NP-complete).
The special cases of this problem for Boolean formulas in conjunctive normal form (CNF) or in disjunctive normal form (DNF) are well studied. These special cases of MCSP are known to be NP-hard~\cite{ilango2020constant}.
Despite that, DNF-MCSP and CNF-MCSP are often solved in practice
for small number of input variables using the widely known Quine--McCluskey algorithm~\cite{quine1952problem,quine1955way,mccluskey1956minimization}.
There is also an alternative approach based on Binary Decision Diagrams (BDDs)~\cite{coudert2002twolevel}.
However, Boolean functions that arise in practice, e.g., in VLSI design, have hundreds or even thousands of input variables and for it there is no hope of finding provably minimal circuits.

\subsection{Circuit Simplification}
In~this paper, we~focus on~the \emph{circuit simplification} problem:
given a~Boolean circuit, find a~circuit of~smaller size computing
the same function. It~is a~difficult problem as~it~generalizes the circuit satisfiability problem:
checking whether a~given circuit is~unsatisfiable (that~is, whether it~computes the same~as a~trivial
circuit that always outputs~$0$) is a~special case of~circuit simplification. In~practice, before checking whether a~given
circuit is~satisfiable or~not, one usually first simplifies~it.
The simplification rules are also applied in~the process of~checking satisfiability (say, after assigning some inputs
of~the circuit). In~applications like this one, it~is crucial
that a~low-effort simplification method is~used.

Circuit simplification is~also a~close relative of~circuit synthesis.
On the one hand,
one can use circuit simplification as a~part of~circuit synthesis
as~well~as in~postprocessing stages.
On~the other hand, one can even use circuit simplification directly for circuit synthesis:
given a~specification of a~Boolean function, synthesize a~naive circuit for~it, then simplify~it.

\Cref{fig:simplification-examples} shows two toy examples 
of~circuit simplification. In~the first example, the $\oplus$-gate has no~outgoing edges (and it~is not an~output gate), thus it~is redundant and can be~removed from the circuit. In~the second example, there are two gates that compute $x_2\oplus x_3$, and so~they can be~merged into one gate.
Such simplification rules are similar to~basic simplification rules like unit clause and pure literal elimination used 
by~CNF SAT solvers.

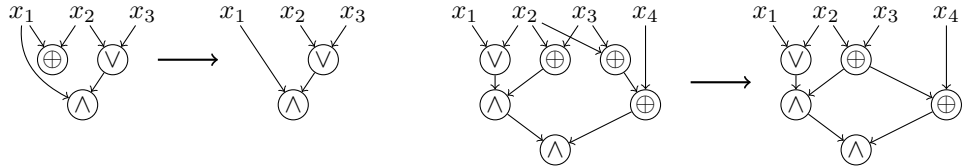
\begin{figure}[ht]
\centering
\begin{tikzpicture}[xscale=0.8, yscale=0.6]
\begin{scope}
\foreach \v/\x in {1,...,3}
	\node[input] (x\v) at (\x-1,2) {$x_\v$};

\foreach \v/\x/\y/\l in { a/0.5/1/\oplus, b/1.5/1/\lor, c/1/0/\land}
	\node[gate] (\v) at (\x,\y) {$\l$};

\foreach \s/\t in {x1/a, x2/a, x2/b, x3/b, b/c}
    \draw[wire] (\s) -- (\t);

\draw[wire] (x1) to [bend right=30] (c);
\end{scope}

\begin{scope}[xshift=3.5cm]
\foreach \v/\x in {1,...,3}
	\node[input] (x\v) at (\x-1,2) {$x_\v$};

\foreach \v/\x/\y/\l in { b/1.5/1/\lor, c/1/0/\land}
	\node[gate] (\v) at (\x,\y) {$\l$};

\foreach \s/\t in {x1/c, x2/b, x3/b, b/c}
    \draw[wire] (\s) -- (\t);

\end{scope}
\draw[wire, thick] (2.25,1) -- (3.25,1);
\node[] at (2.75,-1.5) {}; 
\end{tikzpicture}\quad\qquad
\begin{tikzpicture}[xscale=0.8, yscale=0.6]
\begin{scope}
\foreach \v/\x in {1,...,4}
	\node[input] (x\v) at (\x-1,3) {$x_\v$};

\foreach \v/\x/\y/\l in {a/0.5/2/\lor, b/1.5/2/\oplus, c/2.5/2/\oplus,
						 d/0.5/1/\land, e/3/1/\oplus, f/1.5/0/\land}
	\node[gate] (\v) at (\x,\y) {$\l$};

\foreach \s/\t in {x1/a, x2/a, x2/b, x3/b, x2/c,
				  x3/c, x4/e, c/e, a/d, b/d, d/f, e/f}
    \draw[wire] (\s) -- (\t);

\end{scope}

\begin{scope}[xshift=5cm]
\foreach \v/\x in {1,...,4}
	\node[input] (x\v) at (\x-1,3) {$x_\v$};

\foreach \v/\x/\y/\l in {a/0.5/2/\lor, b/1.5/2/\oplus,
						 d/0.5/1/\land, e/3/1/\oplus, f/1.5/0/\land}
	\node[gate] (\v) at (\x,\y) {$\l$};

\foreach \s/\t in {x1/a, x2/a, x2/b, x3/b, x4/e, b/e, a/d, b/d, d/f, e/f}
    \draw[wire] (\s) -- (\t);

\end{scope}

\draw[wire, thick] (3.75,1.5) -- (4.75,1.5);
\node[] at (4.25,-.5) {}; 
\end{tikzpicture}
\caption{Examples of circuit simplification: removing dangling gates (left) and merging duplicate gates (right).}
\label{fig:simplification-examples}
\end{figure}

To~give a~more involved example, consider the task of~synthesizing a~circuit computing the binary representation of~the sum of~three input bits. That~is, for three inputs bits $x_1,x_2,x_3$, the circuit needs to~output two bits $s$~and~$c$
(sum and carry bits) such that $x_1 + x_2 + x_3 = 2c+s$. Then, $s$~is equal to~$x_1+x_2+x_3$ modulo~$2$, whereas $c$~is the majority function of~$x_1,x_2,x_3$, that~is, $c$~is equal to~$1$ if~and only~if there are at~least two ones among $x_1,x_2,x_3$. 
In~Boolean logic, they can be~computed as~follows:
\[s=x_1 \oplus x_2 \oplus x_3, \quad 
c = (x_1 \land x_2) \lor (x_1 \land x_3) \lor (x_2 \land x_3).\]
This leads to~a~circuit of~size seven shown~in~\Cref{fig:fa} on~the left.
The two outputs of~this circuit are computed independently:
they do~not share any gates. It~is well known that the same function
can be~computed by a~circuit of~size~$5$ shown in~\Cref{fig:fa} on~the right.
This circuit computes the carry bit as 
\[x_1x_2 \oplus x_1x_3 \oplus x_2x_3 = x_1x_2 \oplus (x_1 \oplus x_2)x_3.\]
Such an~expression for the carry bit allows the circuit to~reuse the gate computing $x_1 \oplus x_2$ for both outputs.

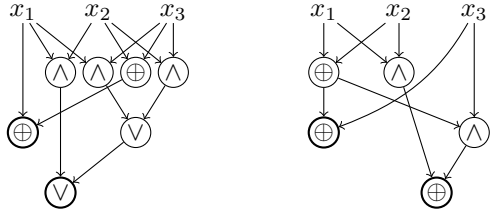
\begin{figure}[ht]
	\begin{center}
		\begin{tikzpicture}[yscale=.8]
			\begin{scope}
				\foreach \i in {1, 2, 3}
					\node[input] (\i) at (\i, 3) {$x_{\i}$};
				
				\foreach \t/\i/\x/\y/\o/\p/\q in {
					gate/4/1.5/2/{\land}/1/2,
					gate/5/2/2/{\land}/1/3,
					gate/6/2.5/2/{\oplus}/2/3,
					gate/7/3/2/{\land}/2/3,
					outgate/8/1/1/{\oplus}/1/6,
					gate/9/2.5/1/{\lor}/5/7,
					outgate/10/1.5/0/{\lor}/4/9} 
				{
					\node[\t] (\i) at (\x, \y) {$\o$};
					\draw[->] (\p) -- (\i);
					\draw[->] (\q) -- (\i);
				}
			\end{scope}
			
			\begin{scope}[xshift=40mm]
				\foreach \i in {1, 2, 3}
					\node[input] (\i) at (\i, 3) {$x_{\i}$};
				\node[gate] (4) at (1, 2) {$\oplus$}; \draw[->] (1)--(4); \draw[->] (2)--(4);
				\node[outgate] (5) at (1, 1) {$\oplus$}; \path (3) edge[->,bend left=20] (5); \draw[->] (4)--(5);
				\node[gate] (6) at (2, 2) {$\land$}; \draw[->] (1)--(6); \draw[->] (2)--(6);
				\node[gate] (7) at (3, 1) {$\land$}; \draw[->] (3)--(7); \draw[->] (4)--(7);
				\node[outgate] (8) at (2.5, 0) {$\oplus$}; \draw[->] (6)--(8); \draw[->] (7)--(8);
			\end{scope}
		\end{tikzpicture}
	\end{center}
	\caption{Two circuits, of~size~$7$ and~$5$, computing the binary representation 
		of~the sum of~three input bits.}
	\label{fig:fa}
\end{figure}

Thus, if a~given circuit contains the left circuit from \Cref{fig:fa} as a~subpart,
one can replace~it by~the right circuit from \Cref{fig:fa}, this way reducing the size
of~the original circuit. This~is exactly what our tool does. We~elaborate on~this 
in~the next section.

\subsection{Our Contribution}
In~this paper, we~present \texttt{Simplifier}, a~new open source tool for
circuit simplification that works with circuits in~both AIG and BENCH bases.
Roughly, AIG circuits use AND and NOT gates only,
whereas BENCH circuits use other popular gates like OR and XOR
(in~Section~\ref{section:formats}, we~introduce these two formats formally).
The tool runs in~just a~few seconds for circuits of~reasonable size.
Such tools are usually called low-effort to~differentiate them
from high-effort tools that exploit more expensive strategies
like simulated annealing and SAT/QBF solvers
\cite{DBLP:conf/mfcs/KulikovPS22,10.1145/3566097.3567894,DBLP:conf/sat/ReichlSS24}.

We~performed experiments on~a~wide range of~Boolean circuits, both industrial and hand-crafted.
For AIG circuits, we~first apply the state-of-the-art \texttt{ABC} framework~\cite{abc}. Our experiments show that our tool can further reduce their size by~additional 4\%, on~average.
This shows that our tool is a~reasonable complement to~\texttt{ABC}.

Though BENCH format is~popular in~practice,
we~are not aware of~any tool capable of~simplifying circuits
in~this format. For this reason, for BENCH circuits,
we~report just the size reduction achieved by~our tool.
For the datasets considered, it~is $30\%$ on~average showing
that circuits arising in~practice are usually far from being optimal.

Our simplification algorithm is based on the idea of \emph{local transformation}:
it~enumerates small \emph{subcircuits} and tries to replace every such subcircuit with
a~smaller one computing the same function. The algorithm starts with a preprocessing stage where it removes some anomalies like dangling and duplicate gates (recall \Cref{fig:simplification-examples}).
In~the main stage, the algorithm enumerates subcircuits with tree inputs of a specific type, \emph{$3$-principal subcircuits}.
Such subcircuits are inclusion maximal in the sense that they can not be extended to a larger subcircuit with three inputs (for formal definition see Section~\ref{sec:principal-subcircuits}).
Moreover, such subcircuits can be enumerated efficiently: we~prove that there is only a linear number of such subcircuits and they can be efficiently found using depth-first search.
For 3-principal subcircuits with at~most three \emph{outputs}, the algorithm looks~up a~smaller circuit in
a~precomputed database.
For all other subcircuits, the algorithm tries to identify and merge equivalent gates in order to optimize the subcircuit locally. The main stage is~repeated several times to~account for changes in~the circuit.

Note that our approach is similar to enumeration of \emph{$k$-cuts} \cite{MCB06,LD11} that roughly corresponds 
to~enumeration of~small $k$-subcircuits with only one output gate.
In~contrast, our tool enumerates subcircuits with up~to three
outputs.

To~summarize, there are two main components of our algorithm that make it
effective and efficient. First, it stores a~precomputed database of all (nearly) optimal circuits with tree inputs and at most three outputs (i.e., multi-output subcircuits).
Second, it is able to quickly scan the given circuit for subcircuits to be improved,
due to the notion of 3-principal subcircuits discussed above.

\subsection{Related work}
Apart from the \texttt{ABC} framework we~consider in~this paper,
we~are unaware of~any other low-effort circuit minimization tools.
At~the same time, 
there exist also 
high-effort simplification tools \cite{DBLP:conf/mfcs/KulikovPS22,10.1145/3566097.3567894,DBLP:conf/sat/ReichlSS24}
that use more time-consuming strategies like simulated annealing and SAT/QBF solvers.


\section{Circuit Representation Formats}\label{section:formats}
By~$B_{n,m}$ we~denote the set of~all Boolean functions with $n$~inputs
and $m$~outputs. $B_n$ is~the set $B_{n,1}$ of~all Boolean predicates.

The two widely adopted formats for representing Boolean circuits are
AIG~\cite{biere2007aiger} and BENCH~\cite{benchformat}.
\begin{description}
	\item[AIG] stands for and-inverter graph. In~this format, a~circuit uses binary ANDs
	and unary NOTs (inverters) as~its gates. In~other words, it~uses a~basis $\{\land, \neg\}$. An~AIG circuit
	is~essentially a~graph of~indegree~two with binary labels on~the edges that correspond to negations. Thus, every node of~this graph computes the binary conjunction.
	This makes~it particularly convenient for storing and drawing such circuits.
	The size of an~AIG circuit is~the number of~ANDs, that~is, the number
	of~nodes in~the graph.

	\item[BENCH] circuits use the following set of gates: unary NOT, binary AND, OR, XOR, and their negations. This is a~richer basis and many functions have more compact representations in~the BENCH format than in~the AIG format (recall Figure~\ref{fig:parity-5}).
\end{description}


\section{Main Algorithm}

\subsection{Database of (Nearly) Optimal Circuits}
Below, we~describe the way we~computed (nearly) optimal circuits for
all Boolean functions with three inputs and at~most three outputs.
The case when the number of~outputs is~smaller than three can
be~reduced easily to~the case when there are exactly three outputs
(by~adding one or~two dummy outputs), so~we assume that the number of~outputs
is~equal to~three. The number of~such functions~is $\binom{2^{2^3}}{3}=2\,763\,520.$
In~order to~reduce the search space,
we~use an~idea of~partitioning the set $B_{3,3}$
of~all functions $\{0,1\}^3\to \{0,1\}^3$ into classes having the same circuit complexity due to~Knuth~\cite[Section~7.1.2]{Knuth:2008:ACP:1377542}.
For example, two functions have the same circuit size if~they result from each other by~permuting their inputs or~outputs.
For the AIG basis,
the circuit size does not change if~one negates some of~the inputs and outputs.
This allows one to~narrow the search space to~functions that output $(0,0,0)$ on~the input $(0,0,0)$.
Knuth calls such functions \emph{normal}.
Using a~similar partition into classes, Knuth computed the exact circuit complexity over
the basis~$B_2$ of~all functions from $B_{4}$~and~$B_5$.

In~the BENCH basis, each function from $B_{3,3}$ has a~relatively small circuit, and it~is possible to~find a~provably
optimal circuit using a~SAT solver.
We~used a~tool~\cite{DBLP:conf/mfcs/KulikovPS22} for this task, where we searched for the smallest circuit size $n$
such that the SAT solver could both find a~circuit of~size $n$ and prove that no~circuit of~size $n - 1$ exists.

In~the AIG basis, circuits for functions from $B_{3,3}$ are typically larger, making it currently infeasible
to compute provably optimal circuits for all of them using a SAT solver.
While we applied the same search for the smallest circuit size $n$, the SAT solver was often unable to prove
the absence of smaller circuits or would hang indefinitely.
Consequently, for several classes of functions, we have efficient circuits but lack formal proof that they are of minimal size.

We~provide detailed statistics in~\Cref{table:stat}.
As~the table reveals, each function from $B_{3,3}$
can be~computed by a~BENCH circuit of~size at~most~$8$
and an~AIG circuit of~size at~most~$11$. \Cref{fig:hard}
shows two circuits computing functions from $B_{3,3}$
and having the maximum circuit complexity in~the
BENCH and AIG bases.

\begin{table}[!ht]
	\begin{center}
		\begin{tabular}{rrrrr}
			\toprule
            & \multicolumn{2}{c}{BENCH} & \multicolumn{2}{c}{AIG}\\
            \cmidrule(lr){2-3} \cmidrule(lr){4-5}
			\# gates & \# classes & \# functions & \# classes &  \# functions \\
			\midrule
			2 & 45 & 396 & 0 & 0 \\
			3 & 659 & 12,480 & 51 & 11,840 \\
			4 & 4,541 & 152,504 & 232 & 72,264 \\
			5 & 18,056 & 761,656 & 726 & 223,640 \\
			6 & 29,571 & 1,349,492 & 1,540 & 498,720 \\
			7 & 10,409 & 481,824 & 2,318 & 762,128 \\
			8 & 119 & 5,168 & 2,163 & 727,264 \\
			9 & 0 & 0 & 16 & 4,672 \\
			$\le 9$ & 0 & 0 & 1610 & 392208 \\
			$\le 10$ & 0 & 0 & 224 & 69664 \\
			$\le 11$ & 0 & 0 & 6 & 1120 \\
			\bottomrule
		\end{tabular}
	\end{center}

	\caption{Distribution of~classes and functions by~circuit size {(the number of gates in a~circuit)}, for all functions with three inputs and three distinct outputs, for BENCH and AIG bases.}
	\label{table:stat}
\end{table}

\begin{figure}[!ht]
	\begin{center}
		\begin{tikzpicture}
		\begin{scope}[yscale=.7]
		\node[input] (x1) at (0, 3) {$x_1$};
		\node[input] (x2) at (1, 3) {$x_2$};
		\node[input] (x3) at (2, 3) {$x_3$};

		\node[gate] (g3) at (0.5, 2) {$\oplus$};
		\node[gate] (g4) at (1.5, 2) {$=$};

		\node[gate] (g5) at (0, 1) {$\overline\lor$};
		\node[gate] (g8) at (1, 1) {$\lor$};

		\node[gate] (g6) at (0.5, 0) {$\overline\lor$};
		\node[outgate] (g9) at (1.5, 0) {$=$};

		\node[outgate] (g7) at (2, -1) {$=$};

		\node[outgate] (g10) at (0, -1.5) {$\lor$};
		\foreach \s/\t in {x1/g3, x2/g3, x1/g4, x3/g4, x1/g5, g4/g5, g3/g8, g4/g8, x3/g9, g8/g9, g5/g6, g3/g6, g6/g7, x3/g7, g5/g10, g7/g10}
		    \draw[wire] (\s) -- (\t);

		\end{scope}

		\begin{scope}[shift={(4,-1.5)},yscale=0.7]

		\node[input] (x1) at (0, 6) {$x_1$};
		\node[input] (x2) at (2, 6) {$x_2$};

		\node[gate] (g7) at (0, 5) {$\land$};
		\node[gate] (g10) at (1, 5) {$\land$};
		\node[input] (x3) at (2, 5) {$x_3$};

		\node[gate] (g8) at (0, 4) {$\land$};
		\node[gate] (g5) at (1, 4) {$\land$};
		\node[gate] (g6) at (2, 4) {$\land$};
		\node[gate] (g11) at (3, 4) {$\land$};

		\node[outgate] (g12) at (0.5, 3) {$\land$};
		\node[gate] (g4) at (1.5, 3) {$\land$};

		\node[gate] (g13) at (0, 2) {$\land$};

		\node[outgate] (g3) at (0, 1) {$\land$};
		\node[outgate] (g9) at (2, 1) {$\land$};

		\foreach \s/\t in {x1/g10, x2/g10, x3/g8, x3/g6, x3/g11, g10/g6}
		    \draw[wire] (\s) -- (\t);

		\foreach \s/\t in {x1/g7, x2/g7, g7/g8, g10/g5, x3/g5, x2/g11, g8/g12, g5/g12, g5/g4, g6/g4, g4/g13, g13/g3, g13/g9, g11/g9}
		    \draw[notwire] (\s) -- (\t);

		\draw[notwire] (g7) to [bend right=20] (g13);
		\draw[notwire] (g7) to [bend right=30] (g3);
		\end{scope}
		\end{tikzpicture}
	\end{center}

	\caption{Two circuits computing two different functions from $B_{3,3}$. The left circuit is~over the base BENCH, 
		has size~$8$ and is~\emph{provably} optimal. The right circuit
		is~over the base AIG, has size~$11$, and 
		is~\emph{presumably} optimal:
		we~are not aware of a~smaller AIG circuit for the same function, but proving this through checking all circuits of~size~$10$
		is~infeasible in~practice.}
	\label{fig:hard}
\end{figure}
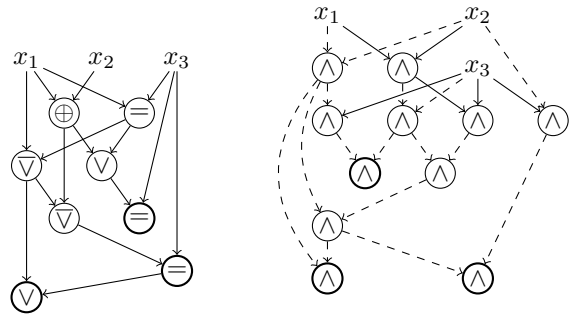

\subsection{Local Transformation}
Our simplification algorithm is based on the idea of \emph{local transformation}---it enumerates small \emph{subcircuits} and tries to replace every such subcircuit with a smaller one computing the same function. In the following, we explain this idea in more detail.

Given a~subset~$G$ of~the gates of a~circuit~$C$,
we~say that \emph{$G$~induces a~subcircuit of~$C$},
if the following property
holds for each gate $v \in G$: if~$v$ is not an~input gate of~$C$, then
either both predecessors of~$v$ belong to~$G$
or~both of~them do~not belong to~$G$.
The \emph{input gates} of the~subcircuit induced by~$G$ are the gates that have no~predecessors in~$G$.
The \emph{output gates} are those gates in~$G$ that
have~at least one successor outside of $G$ and those that are outputs of $C$.
A~\emph{$k$-subcircuit} is a~subcircuit with $k$~input gates. See Fig.~\ref{fig:induced-subcircuits} for an example.
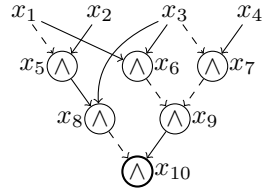
\begin{figure}[ht]
    \centering
\begin{tikzpicture}
\begin{scope}[yscale=.7]
\node[input] (x1) at (0, 6) {$x_1$};
\node[input] (x2) at (1, 6) {$x_2$};
\node[input] (x3) at (2, 6) {$x_3$};
\node[input] (x4) at (3, 6) {$x_4$};

\node[gate, label={[xshift=0.12cm]left:$x_5$}] (g5) at (0.5, 5) {$\land$};
\node[gate, label={[xshift=-0.11cm]right:$x_6$}] (g6) at (1.5, 5) {$\land$};
\node[gate, label={[xshift=-0.11cm]right:$x_7$}] (g7) at (2.5, 5) {$\land$};

\node[gate, label={[xshift=0.12cm]left:$x_8$}] (g8) at (1, 4) {$\land$};
\node[gate, label={[xshift=-0.11cm]right:$x_9$}] (g9) at (2, 4) {$\land$};

\node[outgate, label={[xshift=-0.11cm]right:$x_{10}$}] (g10) at (1.5, 3) {$\land$};

\foreach \s/\t in {x2/g5, x1/g6, x3/g6, x4/g7, g5/g8, g9/g10}
    \draw[wire] (\s) -- (\t);

\foreach \s/\t in {x1/g5, x3/g7, g6/g9, g7/g9, g8/g10}
    \draw[notwire] (\s) -- (\t);
\draw[wire] (x3) to [bend right=30] (g8);
\end{scope}
\end{tikzpicture}
\caption{$G=\{x_1, x_2, x_3, x_5, x_6, x_8\}$
    induces a~subcircuit with inputs $\{x_1,x_2, x_3\}$ and outputs $\{x_3, x_6, x_8\}$,
    whereas $G=\{x_6, x_7, x_8, x_9, x_{10}\}$ induces a~subcircuit with inputs $\{x_6, x_7, x_8\}$ and output $\{x_{10}\}$.}\label{fig:induced-subcircuits}
\end{figure}

The reason for considering subcircuits is~simple.
A~subcircuit~$C'$ of~$C$ is~itself a~circuit computing some
Boolean function $h$.
If $h$~can be computed by~a~smaller circuit $C''$,
then we can replace~an instance of $C'$ in~$C$.
Our algorithm enumerates all $3$-principal subcircuits (see \Cref{sec:principal-subcircuits}) of the input circuit.
One can show that the number of $3$-principal subcircuits is linear (see \Cref{thm:3-principal}) in the size (number of gates) of a circuit.
For every $3$-principal subcircuit with at most three outputs, the algorithm computes its truth table and looks up a smaller circuit computing the same function in a precomputed circuit database.
For subcircuits that have more than three outputs, the algorithm tries to optimize the subcircuit locally by identifying and merging equivalent gates. If after that the number of output gates is at most three, the algorithm looks up the database.

At~first, the algorithm finds all the subcircuits that can be optimized and then it tries to optimize them one by one. As~the subcircuits might intersect, optimizing one of the subcircuits might change some other subcircuit. The algorithm keeps track of it and skips the subcircuits that were modified due to optimization of other subcircuits.
If after one cycle of this algorithm, the circuit changes then there might be some new 3-principal subcircuits that can be optimized. For this reason, we run this algorithm several times in a row (be default we run it 5 times).
Our experiments show that an~impact of~every next iteration is usually significantly smaller {(see \Cref{subsection:experiments}, and \Cref{table:iterationStat})}, however this helps to improve the parts of the circuit that correspond to the subcircuits skipped in the previous iteration.

\subsection{Principal Subcircuits}\label{sec:principal-subcircuits}
Throughout this section, let $C$ be a~fixed circuit.
For a~subset~$X$ of~the gates of~$C$,
define $g(X)$ to be the set of all gates such that
\begin{itemize}
    \item every $v\in g(X)$ either belongs to $X$ or has a predecessor in $X$,
    \item given any assignment to gates in $X$, one can uniquely determine a value for every gate in $g(X)$.
\end{itemize}

We say that a set of gates $X$ is a \emph{proper subcircuit generator}
if there is a gate $v\in g(X)$ such that every $Y \subsetneq X$, $v\not\in g(Y)$.
A proper subcircuit generator $X$ induces a subcircuit on gates $g(X)$ with inputs $X$.
We will identify $g(X)$ with the corresponding induced subcircuit.

We are going to consider only proper subcircuit generators of a small size. For a gate $v$, let $S_k(v)$ be the set of proper subcircuit generators of size $k$ such that for every $X\in S_k(v)$, $v\in g(X)$ and for any $Y\subsetneq X$, $v\not\in g(Y)$.

In the case of $k=2$, one can show that $S_k(v)$ always contains the topologically highest proper subcircuit generator.
\begin{theorem}\label{thm:2-principal}
    For every gate $v$ with $S_2(v)\neq \emptyset$, there is $X \in S_2(v)$ such that for all $Y\in S_2(v)$, $g(Y)\subseteq g(X)$.
\end{theorem}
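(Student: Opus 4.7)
The plan is to use a path-based characterization of $g$ and then analyze the resulting structure of minimum size-$2$ vertex separators in the DAG of ancestors of $v$.

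I would first show, by induction on topological depth, that for any gate set $X$ and any gate $u$, one has $u \in g(X)$ if and only if every directed path in $C$ from a primary input to $u$ passes through at least one gate of $X$. The forward direction unrolls the closure definition; the reverse observes that if $u \notin X$ and every source-to-$u$ path meets $X$, then every source-to-predecessor path also meets $X$, so both predecessors of $u$ lie in $g(X)$. Under this characterization, $X = \{a, b\} \in S_2(v)$ is equivalent to: $X$ is a $2$-vertex separator between the primary inputs and $v$ in the sub-DAG $D_v$ of ancestors of $v$, while neither $\{a\}$ nor $\{b\}$ is such a separator. Monotonicity and idempotence of $g$ also yield $g(X) \subseteq g(Y) \iff X \subseteq g(Y)$, so the order on $S_2(v)$ measures how topologically ``high'' (close to primary inputs) the separator lies.

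To prove the theorem, I would show directly that $(S_2(v), \preceq)$ is linearly ordered, whence the maximum exists by finiteness. Suppose $X = \{a, b\}, Y = \{c, d\} \in S_2(v)$ are $\preceq$-incomparable. Then there are $a^* \in X \setminus g(Y)$ and $c^* \in Y \setminus g(X)$, together with witness paths $P_a\colon \text{source} \to a^*$ avoiding $Y$ and $P_c\colon \text{source} \to c^*$ avoiding $X$. Extending $P_a$ by any $a^* \to v$ tail gives a source-to-$v$ walk that must meet $Y$, and since $P_a$ avoids $Y$, the meeting occurs on the tail; thus every $a^* \to v$ path meets $Y$, and symmetrically every $c^* \to v$ path meets $X$. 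Reading off the first such meeting gives ancestor relations $a^* \prec y$ for some $y \in Y$ and $c^* \prec x$ for some $x \in X$. A case analysis on the four pairs $(x, y) \in X \times Y$ now closes the argument: if $x = a^*$ and $y = c^*$, the relations compose to $a^* \prec c^* \prec a^*$, a directed cycle in $C$, contradicting acyclicity. In the remaining cases, one combines the bypass path $P_c$ (or $P_a$) with the ancestor chain to produce a source-to-$v$ path that avoids one singleton subset of $X$ (or $Y$) entirely, contradicting the minimality condition characterizing $S_2(v)$.

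The main obstacle will be executing the three non-cyclic subcases of the case analysis rigorously: there, the chain $a^* \prec y \prec a'$ (with $a' \ne a^*$) does not self-close, and one must prepend $P_c$ to a continuation along the chain, carefully verifying that the rerouted path genuinely avoids $\{a^*\}$ (or $\{a'\}$) by using that $a^* \prec y$ places $a^*$ strictly upstream in the DAG. The size-$2$ restriction is essential throughout: the case analysis has only four subcases, and the analogous linearity of $S_k(v)$ fails for $k \ge 3$, which is exactly why \Cref{thm:3-principal} takes a different form.
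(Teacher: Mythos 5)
Your reduction to the path characterization of $g$ and the reformulation of $S_2(v)$ as the minimal size-$2$ input--$v$ separators are correct and match the paper's \Cref{lm:all-paths-go-through-X}. The fatal problem is the central claim that $(S_2(v),\preceq)$ is linearly ordered: it is false, so the contradiction you try to derive from an incomparable pair does not exist. Take inputs $s_1,s_2$ and gates $u_1=\neg s_1$, $u_2=\neg u_1$, $w_1=\neg s_2$, $w_2=\neg w_1$, $v=u_2\land w_2$. Then $X=\{u_1,w_2\}$ and $Y=\{u_2,w_1\}$ both lie in $S_2(v)$, with $g(X)=\{u_1,u_2,w_2,v\}$ and $g(Y)=\{u_2,w_1,w_2,v\}$ incomparable. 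In your case analysis this lands in the subcase $x=a'$, $y=c'$ (here $a^*=u_1\prec c'=u_2$ and $c^*=w_1\prec a'=w_2$), which is therefore genuinely realizable; no cycle arises and no separator property is violated. Moreover, the contradiction you aim for in the non-cyclic subcases is logically inverted: a source-to-$v$ path avoiding a singleton subset of $X$ is exactly what the minimality condition defining $S_2(v)$ \emph{guarantees} to exist, so exhibiting such a path contradicts nothing; only a path avoiding all of $X$ (or all of $Y$) would, and none is forced here.

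What the statement actually requires, and what the paper's proof does, is structurally different. For a pair sharing a gate one can indeed show comparability directly (this part of your plan is workable and close to the paper's argument). But for disjoint incomparable pairs one must exhibit a third, ``mixed'' pair --- in the example above $\{a^*,c^*\}=\{u_1,w_1\}$, whose $g$-closure contains both $g(X)$ and $g(Y)$ --- and verify that it again belongs to $S_2(v)$ and dominates both. The maximum then exists by finiteness because every incomparable pair has an upper bound \emph{inside} $S_2(v)$, not because the order is total. To repair your proof you would need to replace the ``remaining cases'' step with the construction and verification of this dominating mixed pair.
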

For the topologically highest proper subcircuit generator $X$ provided by \Cref{thm:2-principal}, we say that $g(X)$ is a \emph{$2$-principal subcircuit}.

Similarly, we want to define $3$-principal subcircuit. For the case of $k=3$, it is not true that there always is a topologically highest proper subcircuit generator in $S_3(v)$. Luckily, it is possible to show that there are at most two ``maximal'' proper subcircuit generators.

\begin{theorem}\label{thm:3-principal}
    For every gate $v$ with $S_3(v)\neq \emptyset$, there are $X,Y \in S_3(v)$ such that for all $Z\in S_3(v)$, $g(Z)\subseteq g(X) \lor g(Z)\subseteq g(Y)$.
\end{theorem}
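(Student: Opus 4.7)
The plan is to reduce to Theorem~\ref{thm:2-principal} by analyzing cases based on $v$'s predecessors. First, I handle trivial situations: if $v$ is a circuit input then $S_3(v) = \emptyset$; if $v$ has a single predecessor $w$ then $v \in g(Z) \iff w \in g(Z)$ for any $Z \not\ni v$, so $S_3(v) = S_3(w)$ and one recurses on $w$. Assume henceforth that $v$ has two predecessors $p_1, p_2$, and observe that no $X \in S_3(v)$ can contain both: otherwise $\{p_1, p_2\} \subsetneq X$ already generates $v$, contradicting the minimality of $X$. Thus $S_3(v)$ partitions into three classes: $T_1$ (those $X$ with $p_1 \in X$ and $p_2 \notin X$), $T_2$ (symmetric), and $T_3$ (those with neither).

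For $T_1$, each $X$ has the form $\{p_1\} \cup X'$ with $|X'|=2$, and minimality of $X$ forces $X'$ to be a minimal pair for which $\{p_1\} \cup X'$ generates $p_2$. Viewing $p_1$ as an additional input, I apply the argument of Theorem~\ref{thm:2-principal} in the resulting augmented circuit to obtain a unique topologically highest such $X'_\ast$. Then $Y^{(1)} := \{p_1\} \cup X'_\ast$ dominates every $X \in T_1$: indeed, the closure $g(\{p_1\} \cup X')$ in the original circuit coincides with the closure of $X'$ in the augmented circuit, and by the 2-principal theorem the latter is contained in the closure of $X'_\ast$. Class $T_2$ is handled symmetrically via $Y^{(2)}$.

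For $T_3$, I extract minimal $X_1, X_2 \subseteq X$ with $p_i \in g(X_i)$; global minimality of $X$ forces $X = X_1 \cup X_2$, and the fan-in-$2$ structure forces $|X_i| \ge 2$ (since $g(\{u\}) = \{u\}$, so a singleton cannot generate a proper descendant). With $|X|=3$, this leaves a few subcases for $(|X_1|, |X_2|, |X_1 \cap X_2|)$. Applying Theorem~\ref{thm:2-principal} to each $p_i$, every such $X_i$ is dominated by the 2-principal generator of $p_i$, and a case analysis shows that either $g(X) \subseteq g(Y^{(1)})$ or $g(X) \subseteq g(Y^{(2)})$, or else $g(X)$ strictly contains both, in which case $X$ is the unique maximum of $S_3(v)$. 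Either way, $S_3(v)$ has at most two maximal elements.

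The main obstacle is the $T_3$ dichotomy: proving that no $T_3$-element is simultaneously incomparable with both $Y^{(1)}$ and $Y^{(2)}$. This reduces to a delicate topological argument about where the shared elements of $X_1$ and $X_2$ sit in the ancestor DAG of $v$: if they lie sufficiently high above the 2-principals of $p_1$ and $p_2$ then $X$ dominates both $Y^{(1)}$ and $Y^{(2)}$; otherwise the domination goes the other way on at least one side, yielding the required containment.
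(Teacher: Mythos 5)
Your route is genuinely different from the paper's: you split $S_3(v)$ by which predecessors of $v$ a generator contains and try to reduce the classes $T_1,T_2$ to Theorem~\ref{thm:2-principal} in an augmented circuit, whereas the paper annotates gates with dependency degrees $dd(w,X)$, walks up from $v$ to the first gate where the annotation branches, classifies the possible annotation pairs (Lemma~\ref{lm:3-principal-lemma}), and then refutes the existence of three mutually non-dominating generators. Unfortunately your argument has a genuine gap exactly where the theorem is hard, namely the class $T_3$ of generators containing neither predecessor. You assert a trichotomy (each $X\in T_3$ is dominated by $Y^{(1)}$, or by $Y^{(2)}$, or dominates both) but give no proof of it --- you explicitly defer it to ``a delicate topological argument'' about where $X_1\cap X_2$ sits, and that argument is precisely the content a proof must supply. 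Moreover, even granting the trichotomy, the conclusion does not follow: when some $X\in T_3$ strictly dominates both $Y^{(1)}$ and $Y^{(2)}$ you declare it ``the unique maximum,'' but nothing you prove rules out a second, incomparable $X'\in T_3$ that also dominates both, which would leave more than two pairwise non-dominating closures and defeat the bound of two witnesses.

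There are also two more local problems. First, your claim $g(\{u\})=\{u\}$ is false: by Lemma~\ref{lm:all-paths-go-through-X}, $g(\{u\})$ contains every gate all of whose ascending paths pass through $u$ (for instance a unary gate fed by $u$, or a gate both of whose predecessors already lie in $g(\{u\})$), so $|X_i|=1$ is possible and your subcase enumeration for $T_3$ is incomplete. Second, the witness $Y^{(1)}=\{p_1\}\cup X'_\ast$ produced by the relative $2$-principal argument need not belong to $S_3(v)$: membership in $S_3(v)$ additionally requires $v\notin g(X'_\ast)$ (and $v\notin g(\{p_1\}\cup W)$ for the other proper subsets), and the topologically highest $2$-generator of $p_2$ relative to $p_1$ may violate this; since the theorem demands witnesses $X,Y\in S_3(v)$, you must either establish this membership or replace the witness, and the same caveat applies to $Y^{(2)}$.
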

For $X$ and $Y$ provided by \Cref{thm:3-principal}, we say that the subcircuits induced by $g(X)$ and $g(Y)$ are \emph{$3$-principal}. See Fig.~\ref{fig:principal} for examples.

Immediately from \Cref{thm:3-principal} we can conclude that every circuit with $n$ gates contains at most $2n$ $3$-principal subcircuits. These subcircuits can be efficiently enumerated by an algorithm based on depth-first search: the set $S_3$ for a gate can be computed from the $S_3$ sets of its parents (the case of a gate with only one parent is trivial).

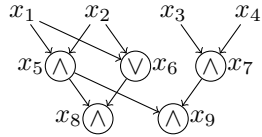
\begin{figure}[!ht]
    \centering
    \begin{tikzpicture}
        \begin{scope}[yscale=.7]
            \node[input] (x1) at (0, 6) {$x_1$};
            \node[input] (x2) at (1, 6) {$x_2$};
            \node[input] (x3) at (2, 6) {$x_3$};
            \node[input] (x4) at (3, 6) {$x_4$};

            \node[gate, label={[xshift=0.12cm]left:$x_5$}] (g5) at (0.5, 5) {$\land$};
            \node[gate, label={[xshift=-0.11cm]right:$x_6$}] (g6) at (1.5, 5) {$\lor$};
            \node[gate, label={[xshift=-0.11cm]right:$x_7$}] (g7) at (2.5, 5) {$\land$};

            \node[gate, label={[xshift=0.12cm]left:$x_8$}] (g8) at (1, 4) {$\land$};
            \node[gate, label={[xshift=-0.11cm]right:$x_9$}] (g9) at (2, 4) {$\land$};

            \foreach \s/\t in {x2/g5, x1/g6, x2/g6, x4/g7, g5/g8}
            \draw[wire] (\s) -- (\t);

            \foreach \s/\t in {x1/g5, x3/g7, g5/g9, g7/g9, g6/g8}
                \draw[wire] (\s) -- (\t);

        \end{scope}
    \end{tikzpicture}
    \caption{For the gate $x_8$, $S_2(x_8) = \{\{x_1,x_2\},\{x_5,x_6\}\}$. The subcircuit induced by $g(\{x_1,x_2\}) =\{x_1,x_2,x_5,x_6,x_8\}$ is $2$-principal, while
        the subcircuit induced by $g(\{x_5,x_6\}) = \{x_5,x_6,x_8\}$ is not.
        For the gate $x_9$, $S_3(x_9) = \{\{x_1,x_2,x_7\},\{x_5,x_3,x_4\}\}$. The subcircuits induced by
        $\{x_1,x_2,x_5,x_6,x_8,x_7,x_9\}$ and $\{x_3,x_4,x_5,x_7,x_9\}$ are $3$-principal.}\label{fig:principal}
\end{figure}

In the rest of this section we explain how to prove Theorem~\ref{thm:3-principal}. We start by proving Theorem~\ref{thm:2-principal}, and then we give a sketch of the proof for Theorem~\ref{thm:3-principal}.

The proof of Theorem~\ref{thm:2-principal} depends on the following two lemmas.
The first lemma follows from the definition of $g(X)$.
\begin{lemma}\label{lm:all-paths-go-through-X}
	Let $X$ be a set of gates. For any gate $v$, $v\in g(X)$ if and only if every ascending path from $v$ to the inputs goes through some gate from $X$.
\end{lemma}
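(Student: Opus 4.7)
The plan is to prove both directions by a structural induction along the circuit DAG, leaning on the intended syntactic reading of $g(X)$: a gate lies in $g(X)$ exactly when its value can be obtained from values assigned to $X$ by propagating forward through the circuit. Under this reading, membership in $g(X)$ is the smallest closure that contains $X$ and admits a gate whenever both its predecessors are already inside, while the path condition in the lemma is precisely the dual ``cut'' characterization of such a closure.

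For the ``if'' direction, I would fix $v$ with the path property and induct on the length of the longest ascending path from $v$ to an input. In the base case $v$ is itself an input, so the trivial length-zero path from $v$ to $v$ must already pass through $X$, forcing $v \in X \subseteq g(X)$. For the inductive step, either $v \in X$ and we are done, or $v$ is a non-input gate outside $X$ with predecessors $p_1, p_2$. For each $j$, any ascending path from $p_j$ to an input extends, by prepending the edge $v \to p_j$, to an ascending path from $v$, which by hypothesis meets $X$; since $v \notin X$, the meeting gate lies on the $p_j$-portion, so the path property already holds for $p_j$. The inductive hypothesis then yields $p_1, p_2 \in g(X)$, and since $v$'s value is computed from those of its predecessors, we obtain $v \in g(X)$.

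For the ``only if'' direction, I would argue contrapositively: suppose some ascending path $v = u_0, u_1, \ldots, u_k$ to an input $u_k$ avoids $X$ entirely. Walking the path from top to bottom, I show by induction on $k - i$ that none of $u_k, u_{k-1}, \ldots, u_0$ lies in $g(X)$. The input $u_k$ is not in $X$, and its value is a free choice unconstrained by any assignment to $X$, so $u_k \notin g(X)$. For $i < k$, the gate $u_{i+1}$ is a predecessor of $u_i$ that is neither in $X$ nor (by the inductive hypothesis) in $g(X)$, so no assignment to $X$ determines $u_{i+1}$; hence the value at $u_i$ is not determined by $X$ either, and $u_i \notin g(X)$. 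Applying this at $i = 0$ gives $v \notin g(X)$.

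The main subtlety lies in the inductive step of the second direction, where concluding $u_i \notin g(X)$ from $u_{i+1} \notin g(X) \cup X$ requires the syntactic reading of ``uniquely determined by $X$''---propagation through the circuit rather than a semantic coincidence (such as a gate computing a constant that masks a free predecessor). Once this reading is fixed, both directions collapse into the straightforward DAG inductions above, which is why the paper states that the lemma ``follows from the definition of $g(X)$.''
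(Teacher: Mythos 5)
Your proof is correct and follows essentially the same route as the paper's: both directions are the natural induction along the circuit DAG using the closure reading of $g(X)$, with your ``only if'' direction being merely the contrapositive phrasing of the paper's direct argument. Your flagged subtlety about the syntactic versus semantic reading of ``uniquely determined'' is a fair observation (the paper's proof silently relies on the same syntactic closure property), but it does not change the argument.
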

\begin{proof}
If $v\in g(X)$ then either $v\in X$ or all parents of $v$ are in $g(X)$. Hence all the ascending paths necessarily go through the gates in $X$. If all the ascending paths from $v$ go through the gates in $X$, then given an assignment to $X$ we can compute a value for $v$, and hence $v\in g(X)$.
\end{proof}

The second lemma shows that all the gates of a proper subcircuit generator are essential.
\begin{lemma}\label{lm:paths-for-elements-of-X}
	Let $X$ be a proper subcircuit generator and $v\in g(X)$ be such that $\forall Y\subsetneq X$, $v\not\in g(Y)$. Then for every $x\in X$ there is an ascending path $\pi$ from $v$ to some input such that $\pi \cap X = \{x\}$.
\end{lemma}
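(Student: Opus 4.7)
The plan is to obtain the desired path $\pi$ by a single application of each direction of Lemma~\ref{lm:all-paths-go-through-X}, using the minimality hypothesis on $X$ to control which gates $\pi$ is forced to avoid and which it is forced to meet.

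Fix an arbitrary $x \in X$ and set $Y = X \setminus \{x\}$. Since $Y \subsetneq X$, the hypothesis on $v$ gives $v \notin g(Y)$. By the contrapositive of Lemma~\ref{lm:all-paths-go-through-X} applied to $Y$, there exists an ascending path $\pi$ from $v$ to some input such that $\pi \cap Y = \emptyset$.

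On the other hand, $v \in g(X)$, so another application of Lemma~\ref{lm:all-paths-go-through-X} (this time in the forward direction, applied to $X$) forces $\pi$ to pass through at least one gate of $X$. Combining the two constraints, $\pi \cap X \subseteq X \setminus Y = \{x\}$ and $\pi \cap X \neq \emptyset$, so $\pi \cap X = \{x\}$, as required.

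I do not anticipate a real obstacle here: the argument is essentially just ``remove $x$ from $X$ and apply the previous lemma twice.'' The only point that needs a small amount of care is the convention about endpoints of an ascending path (whether $v$ itself counts as lying on $\pi$), but since $v \in g(X)$ together with the hypothesis that $X$ is a proper generator ensures that the endpoint conventions used in Lemma~\ref{lm:all-paths-go-through-X} are consistent in both invocations, nothing more is needed.
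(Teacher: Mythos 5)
Your proof is correct, and it is in fact more direct than the one in the paper. Both arguments rest on Lemma~\ref{lm:all-paths-go-through-X} together with the minimality hypothesis on $v$, but they are organized differently. The paper first establishes only the weaker claim that for every $x\in X$ \emph{some} ascending path from $v$ passes through $x$, and then runs a separate contradiction argument (a case analysis on cuts separating $v$ from $x$, or $x$ from the inputs) to upgrade this to a path meeting $X$ in $x$ alone. You get the stronger conclusion in one shot: the contrapositive of Lemma~\ref{lm:all-paths-go-through-X} applied to $Y=X\setminus\{x\}$ (which is a legitimate instance, since the lemma is stated for arbitrary gate sets) immediately yields a path avoiding all of $X\setminus\{x\}$, and the forward direction applied to $X$ forces that same path to meet $X$, hence to meet exactly $\{x\}$. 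This collapses the paper's two-stage argument and makes its cut analysis unnecessary; it also uses only the witnessing property of $v$ from the definition of a proper subcircuit generator, so no hypothesis is wasted. Your closing remark about endpoint conventions is the right caveat to flag, and it is indeed harmless here.
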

\begin{proof}
By Lemma~\ref{lm:all-paths-go-through-X}, all the ascending paths from $v$ go through $X$. The condition $\forall Y\subsetneq X$, $v\not\in g(Y)$ guarantees that for every $x\in X$ there is an ascending path from $v$ going through $x$. Finally, assume that for some $x \in X$ all the ascending paths from $v$ go through some other gate in $X$. Let $Y\subset X\setminus\{x\}$ be the set of all gates that belong to such ascending paths from $v$ going through $x$. There are two cases: either $V$ contains a cut separating $v$ and $x$, or $V$ contains a cut separating $x$ and the inputs. In both cases, some of the gates in $X$ are redundant which contradicts the definition of a proper subcircuit generator.
\end{proof}

\begin{proof}[Theorem~\ref{thm:2-principal}]
Assume for the sake of contradiction that there are two distinct topologically highest 2-gate sets $X$ and $Y$ such that $g(X)\not\subset g(Y)$ and $g(Y)\not\subset g(X)$.
Let's start with the case when $X$ and $Y$ share some gate, i.e. $X=\{a,b\}$ and $Y=\{a,c\}$.
Consider an ascending path from $v$ that goes trough gate~$b$ and avoids gate~$a$. Such a~path is guaranteed to exist by Lemma~\ref{lm:paths-for-elements-of-X}. By Lemma~\ref{lm:all-paths-go-through-X} this path has to go through gate $c$ (otherwise it avoids all the gates in $Y$). If $c$ precedes $b$ on the path then $c\in g(X)$ and hence $g(Y)\subset g(X)$ which contradicts the assumption. Similarly, if $b$ precedes $c$ on the path then we also get a contradiction $g(X)\subset g(Y)$.

The analysis for the case of disjoint $X$ and $Y$ is similar. Let $X=\{a,b\}$ and $Y=\{c,d\}$. We need to consider all mixed couples $\{a,c\}, \{a,d\}, \{b,c\}, \{b,d\}$ and by careful case analysis show that one of them dominates both $X$ and $Y$.
\end{proof}

Now we describe a proof sketch for Theorem~\ref{thm:3-principal}.
\begin{proof}[Theorem~\ref{thm:3-principal}, sketch]
    For a set of gates $X$ and a gate $v\in g(X)$, we denote by $dd(v,X)$ the \emph{dependency degree} of $v$ on $X$ that is the minimum size of $Y\subset X$ such that $v\in g(Y)$. Let $v$, $X$ and $Y$ be defined as in  Theorem~\ref{thm:3-principal}, and also $X\neq Y$. We are going to annotate some gates with pairs of numbers that are the dependency degrees on $X$ and $Y$. Lets start with $v$, it is annotated with $(dd(v,X), dd(v,Y)) = (3,3)$. Then we proceed to the parents of $v$ and continue until we find a gate such that it is annotated with $(3,3)$ but both its parents annotated with something else. Let $w$ be the first such gate on some ascending path from $v$, and let $a_1$ and $a_2$ be the annotations of $w$'s parents. One can prove the following lemma.
    \begin{lemma}\label{lm:3-principal-lemma}
        $\{a_1,a_2\} \in
        \bigl\{
           \{(2,1),(1,2)\}, \{(2,1),(1,3)\}, \{(1,2),(3,1)\}
        \bigr\}$.
    \end{lemma}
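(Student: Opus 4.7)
The plan is to combine subadditivity of the dependency degree with a finite case analysis refined by the structure of proper subcircuit generators. The key ingredients are the hypothesis $a_i \in \{1,2,3\}^2 \setminus \{(3,3)\}$, lower bounds on $x_1+x_2$ and $y_1+y_2$ coming from $dd(w,X) = dd(w,Y) = 3$, and the essentiality of the elements of $X$ and $Y$ inherited from $X, Y \in S_3(v)$.

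First, for any minimum $X$-witnesses $W_i^X \subseteq X$ of $p_i$ (with $|W_i^X| = x_i$ and $p_i \in g(W_i^X)$), the union $W_1^X \cup W_2^X$ is itself a witness for $w$ on $X$, since the value of $w$ is determined by the values of $p_1, p_2$. Hence $3 = dd(w,X) \le |W_1^X \cup W_2^X| \le |X| = 3$, forcing $W_1^X \cup W_2^X = X$ and $x_1 + x_2 \ge 3$. The same reasoning applied to $Y$ yields $W_1^Y \cup W_2^Y = Y$ and $y_1 + y_2 \ge 3$. Combined with $a_i \ne (3,3)$, this restricts $\{a_1, a_2\}$ to a short finite list: the three options stated in the lemma together with a handful of forbidden candidates such as $\{(1,1),(2,2)\}$, $\{(2,2),(2,2)\}$, $\{(1,3),(3,1)\}$, $\{(1,2),(2,2)\}$, and their symmetric variants.

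To rule out each forbidden candidate I apply \Cref{lm:paths-for-elements-of-X} to the proper subcircuit generators $X$ and $Y$ at the gate $w$ itself (which is legitimate because $dd(w,X) = 3 = |X|$ and $dd(w,Y) = 3 = |Y|$, so the hypotheses of the lemma are met). This produces, for every $x \in X$ and every $y \in Y$, an ascending path from $w$ that isolates that element within $X$ or $Y$ respectively. Each such path continues from $w$ through exactly one of the two parents $p_1, p_2$, and the annotation $a_i$ constrains which elements can be isolated along paths routed through $p_i$: for example, if $dd(p_i,X) = 1$ with minimum witness $\{x^*\}$, then every ascending path from $p_i$ hits $x^*$ by \Cref{lm:all-paths-go-through-X}, so only isolating paths for $x = x^*$ can pass through $p_i$. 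A case-by-case tally of how the six isolating paths distribute between $p_1$ and $p_2$, together with the identities $W_1^X \cup W_2^X = X$ and $W_1^Y \cup W_2^Y = Y$, should yield in every forbidden case either a contradiction with the minimality of $dd(p_i,X)$ or $dd(p_i,Y)$, or a structural redundancy in $X$ or $Y$ inconsistent with $X, Y \in S_3(v)$.

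The principal obstacle will be the balanced configurations such as $\{(1,3),(3,1)\}$ and $\{(2,2),(2,2)\}$, where the two annotations are symmetric in the $X/Y$ roles. For these, a direct pigeonhole is insufficient, and one must carefully track how isolating paths for $X$ and $Y$ interact at $w$ and above, possibly distinguishing the subcase $w = v$ from the subcase where $w$ is a proper ancestor of $v$ (in the latter case, every gate on the designated ascending path from $v$ to $w$ is itself annotated $(3,3)$, providing additional rigidity). I expect each balanced case to require its own tailored routing argument, drawing on the global constraint that $X$ and $Y$ are both $3$-principal generators of the same gate $v$.
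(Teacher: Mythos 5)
Your opening moves are sound and they match the ingredients the paper itself points to: subadditivity of the dependency degree across $w$ (writing $a_i=(x_i,y_i)$, this gives $x_1+x_2\ge 3$, $y_1+y_2\ge 3$, and the fact that any two minimum witnesses satisfy $W_1^X\cup W_2^X=X$), together with \Cref{lm:all-paths-go-through-X} and \Cref{lm:paths-for-elements-of-X} applied at $w$ to generate and route isolating paths. Since the paper gives only a one-sentence indication of the proof of this lemma, at the level of ``which tools are used'' you are close to it. Still, there are two genuine gaps. The first is that the entire content of the lemma is the elimination of the forbidden annotation pairs, and you do not complete a single elimination: you explicitly defer the balanced cases such as $\{(1,3),(3,1)\}$ and $\{(2,2),(2,2)\}$ as needing ``tailored routing arguments'' that you have not supplied. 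A plan that postpones exactly the hard cases is not yet a proof.

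The second gap is more fundamental: the contradiction mechanisms you propose (violating minimality of $dd(p_i,X)$ or $dd(p_i,Y)$, or exhibiting redundancy in $X$ or $Y$) cannot suffice, because you never invoke the hypothesis that $X$ and $Y$ do not dominate each other, i.e.\ $g(X)\not\subseteq g(Y)$ and $g(Y)\not\subseteq g(X)$, which is the reading of ``$X$ and $Y$ as in \Cref{thm:3-principal}'' under which the lemma is used in the surrounding contradiction argument. Without that hypothesis the lemma is false. Take $Y=\{y_1,y_2,y_3\}$ sitting above $X=\{x_1,x_2,x_3\}$ along three chains, so that every ascending path from $x_i$ passes through $y_i$; let $p_1$ have all its ascending paths through the first chain, let $p_2$ split its paths between the second and third chains, and let $v=w$ be the join of $p_1$ and $p_2$. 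Then $X,Y\in S_3(w)$, the gate $w$ is annotated $(3,3)$, and its parents carry the forbidden pair $\{(1,1),(2,2)\}$ --- one of the very candidates you claim to rule out --- yet all six isolating paths route consistently, no minimality of $dd$ is violated, and no element of $X$ or $Y$ is redundant. The only defect of this configuration is that $g(X)\subsetneq g(Y)$. So each elimination must ultimately conclude ``hence one of $g(X),g(Y)$ contains the other, contradiction,'' and this is presumably where \Cref{thm:2-principal} --- named by the authors as an ingredient but absent from your proposal, e.g.\ for comparing the two size-two witnesses attached to a parent annotated $(2,2)$ --- has to enter. Until the case analysis is rebuilt around the non-domination hypothesis, it cannot close.
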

    The proof of this lemma is a careful application of Lemmas~\ref{lm:all-paths-go-through-X} and~\ref{lm:paths-for-elements-of-X}, and Theorem~\ref{thm:2-principal} to the case analysis.

    Assume for the sake of contradiction that there are three distinct proper subcircuit generators that do not dominate each other $X$, $Y$ and $Z$. Now we are going to annotate gates with triples of numbers that are the dependency degrees on $X$, $Y$ and $Z$. Gate $v$ is annotated with $(dd(v,X),dd(v,Y),dd(v,Z)) = (3,3,3)$. Again we proceed to the parents of $v$ and continue until we find a gate such that it is annotated with $(3,3,3)$ but both its parents are annotated with something else. Now our goal is to analyze all the cases and show that every case is either impossible, e.g., due to Lemma~\ref{lm:3-principal-lemma}, or in this case we can construct a new triple from the pieces of $X$, $Y$ and $Z$ that dominates some of them.
\end{proof}


\section{Experimental Evaluation}
In~this section, we~report the results of~experiments.
The used datasets, the source code, and the scripts for statistics collection
are provided in~the~artifact attached to~this paper.
The code and benchmarks are also available in the standalone repository \url{https://github.com/SPbSAT/simplifier}.

\subsection{Datasets}\label{section:datasets}

For the experiments, we~used the following collections of~circuits.
\begin{description}
	\item[Combinatorial principles.] We~implemented circuit generators for a~number of~combinatorial objects. Below,
    we~describe the corresponding circuits. Besides the standard
    conjunction, disjunction, and negation gates, we~use the following two building blocks.
    \begin{itemize}
        \item $\SUM(x_1, \dotsc, x_n) \colon \{0,1\}^n \to \{0,1\}^{\lceil \log_2(n+1) \rceil}$ computes the binary representation of~the sum (over integers) of~$n$~input bits. It~can
        be~computed by~circuits of~size $4.5n+o(n)$ over~$B_2$~\cite{DBLP:journals/ipl/DemenkovKKY10} and circuits of~size~$7n+o(n)$
        over~$\{\land, \lor, \neg\}$~\cite{DBLP:journals/siamcomp/Zwick91}.
        \item $\AtLeast_k(x_1, \dotsc, x_n) \colon \{0,1\} \to \{0,1\}=[x_1+\dotsc+x_n \ge k]$, i.e., it~outputs~1 if~and only if~the sum of the~$n$~input bits is~at~least~$k$. It~can be~computed in~$o(n)$ additional gates after computing $\SUM(x_1, \dotsc, x_n)$. When $k=1$, the function simplifies to~the disjunction of~the input bits.
        The function $\AtMost_k$ is~defined similarly.
    \end{itemize}

    We~use the following combinatorial principles and problems.
	\begin{description}
		\item[Pigeonhole principle:] given integers $n$,~$m$, and~$k$,
        a~circuit checks
		whether it is possible
        to~distribute $n$~pigeons into $m$~holes so~that
        no~hole contains more
        than $k$~pigeons.
		The circuit has $n m$ inputs, where an~input variable $x_{ij}$ indicates whether the $i$-th pigeon is placed in the $j$-th hole. The output is defined by the following function:
		\[\bigwedge_{i = 1}^n \AtLeast_1(x_{i1}, x_{i2}, \dotsc, x_{im}) \land \bigwedge_{j = 1}^m \AtMost_k(x_{1j}, x_{2j}, \dotsc, x_{nj}).\]
        Clearly, the circuit is~satisfiable if~and only~if $n \ge mk$.

        \item[Even colouring principle:] given an~Eulerian undirected graph $G = (V, E)$, a~circuit checks whether
		the edges of~$G$ can be labeled by~$0$ and~$1$ in such a~way that
		half of the edges incident to any vertex have label~$1$.
        We~introduce a~variable $x_{uv}$ for every edge $\{u,v\} \in E$.
        Then, the output is~defined by the following function:
		\[\bigwedge_{u = 1}^n \left[\SUM(\{ x_{uv} \colon \{u,v\} \in E\}) = \frac{\operatorname{degree}(u)}{2}\right].\]
        It is known that an~Eulerian graph has an~even colouring if~and only~if
        it~has an~even number of~edges~\cite{ecp}.
		Most of~our benchmarks are generated using
		random $d$-regular graphs for even~$d$.

		\item[Clique problem:] given an~undirected graph $G = (V, E)$ with
        $n$~nodes and an~integer parameter~$k$, a~circuit checks whether there exists a~clique
        of~size~$k$ in~$G$.
		For each $v \in V$, we~introduce a~variable~$x_v$ indicating
        whether $v$~belongs to a~clique.
		The output is defined by the following function:
		\[\bigwedge_{e =\{u,v\} \not \in E} (\overline{x_u} \lor \overline{x_v}) \land \AtMost_{n-k}(\{\overline{x_v} \colon v \in V\}).\]
		Our benchmarks are generated using graphs with large cliques~\cite{cliquebase}.

		\item[Factorization:] given an~integer~$k$, the circuit checks whether there exists a decomposition of~$k$ into two nontrivial factors. We use a~circuit computing the product of two $\lceil \log k \rceil$-bit integers, with the added constraint that the first multiplier is~neither~$1$ nor~$k$.
	\end{description}
	\item[Miter circuits.] Given two circuits $C_1, C_2 \colon \{0,1\}^n \to \{0,1\}^m$, their miter is a~circuit~$C$ combined out~of $C_1$~and~$C_2$ which is~unsatisfiable if~and only~if $C_1$~and~$C_2$ compute the same function. Formally, for an~input $x \in \{0,1\}^n$, let $C_1(x)=(y_1, \dotsc, y_m)$ and $C_2(x)=(z_1, \dotsc, z_m)$. Then,
    \(C(x)=\bigvee_{i=1}^{m}(y_i \oplus z_i).\)
    We~use miters for various pairs of~circuits described below.
	\begin{description}
		\item[Summation.] Circuits of size $4.5n+o(n)$ and $5n$ computing the binary representation of~the sum of $n$~input bits~\cite{DBLP:journals/ipl/DemenkovKKY10}.
		\item[Threshold.] Circuits of~size $2n+o(n)$ and $3n-O(1)$ computing $[x_1+\dotsb+x_n \ge 2]$.
		\item[Multiplication.] Circuits computing the product of~two $n$-bit integers of~size~$n^2$ (high school multiplication method) and of~size $n^{1.66}$ (Karatsuba's algorithm). Some of~the circuits are produced using Transalg tool~\cite{lmcs:6177}.
		\item[Sorting.] Circuits sorting fixed-size integers. The circuits are
		produced from various sorting algorithms using Transalg~\cite{lmcs:6177}.
	\end{description}

	\item[Public datasets.] We~also used many publicly available collections of~circuits:
    combinational benchmarks from ISCAS85~\cite{iscas85},
    EPFL combinational benchmark suite~\cite{epflBench},
    benchmarks from HWMCC20~\cite{hwmcc20},
    ITC99 benchmarks~\cite{ITC99},
    Bristol Fashion MPC circuits~\cite{BristolFashion}.
\end{description}

\subsection{Experimental Results}\label{subsection:experiments}
Below, we~report the results of~experiments for the two bases, AIG and BENCH.

\subsubsection{AIG Circuits}
	We compare our tool, \texttt{Simplifier}, with the commonly used \texttt{resyn2} script~\cite{10247961} from the \texttt{ABC} framework.
	Following~\cite{8342109,LD11,10.1145/3566097.3567894}, we~perform several consecutive runs of~\texttt{resyn2} on AIG circuits and compare them to collaborative runs of \texttt{resyn2} and \texttt{Simplifier}: this ensures that when we~apply our tool
    to~a~circuit, this circuit is~already simplified by~the state-of-the-art \texttt{ABC} framework.
    For experimental evaluation, we~used all circuits (from the corresponding class) having no~more than $50,000$ gates to ensure that the calculations could be completed in a reasonable amount of time. 
    
    \Cref{table:stat1AIG} shows the results of~experimental evaluation using the following metrics.
	Let $s_0$~denote the size of the original circuit, $s_1$~be the size after the first sequence, and $s_2$~be the size after the second sequence that includes a~call to~\texttt{Simplifier} (where the size refers to the number of AND-gates).
	Let also $t_1, t_2$~be the time required for the first and the second sequences respectively.
    We~report the \emph{average} and \emph{median} values for three metrics to~evaluate the performance of \texttt{Simplifier}:

    \begin{description}
	\item[Absolute improvement]  
	quantifies the additional size reduction achieved by using \texttt{Simplifier} after applying \texttt{resyn2}, calculated as $(s_1 - s_2) / s_1$. It highlights how much smaller a circuit becomes when combining \texttt{Simplifier} with \texttt{resyn2} rather than relying solely on \texttt{resyn2}. This provides a direct measure of the impact of \texttt{Simplifier}.
	
        \item[Relative improvement]
	compares the reduction achieved by \texttt{Simplifier} relative to the reduction already achieved by \texttt{resyn2}, calculated as $(s_1 - s_2) / (s_0 - s_1)$. Unlike absolute improvement, which depends on the size of the original circuit, this metric normalizes the comparison to focus on the relative effectiveness of \texttt{Simplifier}. It is particularly useful for understanding how impactful \texttt{Simplifier} is compared to \texttt{resyn2}, regardless of circuit size.
	
    \item [Time overhead]
	    evaluates the additional computational time required for the second sequence (including \texttt{Simplifier}) relative to the first sequence (using only \texttt{resyn2}), calculated as $(t_2 - t_1) / t_1$. It helps assess the efficiency of incorporating \texttt{Simplifier} in terms of time cost.
    \end{description}

    The results show that, in~many cases, our tool can further simplify a~given circuit,
    making the tool a~valuable complement to~\texttt{ABC}.

\begin{table}[ht]
    \begin{center}
        \setlength\tabcolsep{.75\tabcolsep}
        \begin{tabular}{p{42mm}rrrrrrr}
			\toprule
			& & \multicolumn{3}{c}{average (\%)} & \multicolumn{3}{c}{median (\%)} \\
			\cmidrule(lr){3-5} \cmidrule(lr){6-8}
			Class & Count &
			\multicolumn{1}{c}{relative} & \multicolumn{1}{c}{absolute} & \multicolumn{1}{c}{time} &
			\multicolumn{1}{c}{relative} & \multicolumn{1}{c}{absolute} & \multicolumn{1}{c}{time} \\
			\midrule
			\multicolumn{8}{c}{comparing $\texttt{resyn2} \times 6$ against $\texttt{resyn2} \times 6 + \texttt{Simplifier}$} \\
			\midrule
			Pigeonhole principle & 60 & $0.18$ & $0.03$ & $12.43$ & $0.00$ & $0.00$ & $10.20$ \\
			Even colouring principle & 67 & $0.78$ & $0.22$ & $57.07$ & $0.00$ & $0.00$ & $43.62$ \\
			Clique search problem & 52 & $0.72$ & $1.20$ & $17.24$ & $0.62$ & $1.30$ & $15.43$ \\
			Factorization &129  & $0.11$ & $0.09$ & $33.15$ & $0.11$ & $0.08$ & $35.85$ \\
			Miter, summation & 7 & $31.16$ & $8.22$ & $58.54$ & $31.22$ & $8.22$ & $62.76$ \\
			Miter, threshold & 16 & $5.60$ & $0.03$ & $54.61$ & $1.09$ & $0.01$ & $50.94$ \\
			Miter, multiplication & 43 & $5.79$ & $3.54$ & $18.90$ & $8.01$ & $4.72$ & $17.98$ \\
			Miter, sorting & 21  & $0.31$ & $0.11$ & $29.30$ & $0.19$ & $0.09$ & $27.85$ \\
			EPFL & 28 & $490.18$ & $1.29$ & $45.31$ & $0.16$ & $0.00$ & $48.44$ \\
			HWMCC’20 & 17 & $0.40$ & $0.19$ & $48.78$ & $0.00$ & $0.00$ & $50.70$ \\
			ITC’99 & 25 & $7.42$ & $0.98$ & $51.27$ & $4.99$ & $0.79$ & $31.08$ \\
			Bristol Fashion MPC Circuits & 9 & $1.88$ & $1.50$ & $35.15$ & $0.11$ & $0.07$ & $26.47$ \\
			\midrule
			\multicolumn{8}{c}{comparing $\texttt{resyn2} \times 3$ against $\texttt{resyn2} \times 2 + \texttt{Simplifier}$} \\
			\midrule
			Pigeonhole principle & 60 & $-0.19$ & $-0.85$ & $-8.47$ & $0.00$ & $0.00$ & $-11.84$ \\
			Even colouring principle & 67 & $-0.06$ & $-1.63$ & $56.62$ & $-0.02$ & $-0.08$ & $46.20$ \\
			Clique search problem & 52 & $0.31$ & $0.24$ & $-0.24$ & $0.15$ & $0.11$ & $-1.40$ \\
			Factorization & 129 & $0.07$ & $0.09$ & $24.62$ & $0.06$ & $0.08$ & $18.37$ \\
			Miter, summation & 7 & $8.22$ & $31.16$ & $61.33$ & $8.22$ & $31.22$ & $59.72$ \\
			Miter, threshold & 16 & $-0.18$ & $-30.62$ & $42.87$ & $-0.02$ & $-8.71$ & $35.75$ \\
			Miter, multiplication & 43 & $3.40$ & $5.61$ & $-0.23$ & $4.54$ & $7.63$ & $0.74$ \\
			Miter, sorting & 21 & $-0.59$ & $-1.47$ & $15.18$ & $-0.34$ & $-0.96$ & $10.77$ \\
			EPFL & 28 & $1.07$ & $489.47$ & $41.54$ & $0.00$ & $0.00$ & $33.53$ \\
			HWMCC’20 & 17 & $-0.04$ & $-0.01$ & $46.27$ & $0.00$ & $0.00$ & $29.21$ \\
			ITC’99 & 25 & $0.69$ & $5.76$ & $59.29$ & $0.41$ & $3.73$ & $26.32$ \\
			Bristol Fashion MPC Circuits & 9 & $1.41$ & $1.80$ & $30.52$ & $0.00$ & $0.00$ & $19.23$ \\
			\bottomrule
		\end{tabular}
	\end{center}
	\caption{Comparison of several runs of~\texttt{resyn2} against several runs of~\texttt{resyn2} followed by~a~run of~\texttt{Simplifier}. All values are percentages.}
	\label{table:stat1AIG}
\end{table}

    Our experiments suggest that, in~most cases, one does not need to~run \texttt{Simplifier}
    for more than five iterations: most of~the size reduction happens during the first iteration and the reduction decays during the next four (or~less) iterations.
    Table~\ref{table:iterationStat} shows the details for twelve circuits
    selected out of twelve classes considered above.

    \begin{table}[!ht]
        \begin{center}
            \setlength\tabcolsep{.9\tabcolsep}
            \begin{tabular}{p{35mm}rrrrrrrrrrr}
                \toprule
                & \multicolumn{3}{c}{size (gates)} & \multicolumn{2}{c}{time (sec.)} & \multicolumn{5}{c}{iterations}\\
                \cmidrule(lr){2-4} \cmidrule(lr){5-6} \cmidrule(lr){7-11}
                name & original & \texttt{r} & \texttt{r+s} & \texttt{r} & \texttt{s} & 1 & 2 & 3 & 4 & 5 \\
                \midrule
                19\_20\_1.bench & 2792 & 2195 & 2176 & 1.04 & 0.33 & 19 & 0 & - & - & - \\
                900\_4.bench & 14399 & 12463 & 12459 & 1.22 & 2.87 & 4 & 0 & - & - & - \\
                hamming8-4\_sat.bench & 25320 & 7660 & 7519 & 2.99 & 1.84 & 75 & 62 & 3 & 0 & - \\
                2356808270112211.bench & 37318 & 21064 & 21039 & 6.13 & 6.96 & 13 & 6 & 0 & - & - \\
                1500.bench & 17147 & 13760 & 12456 & 5.08 & 6.21 & 497 & 187 & 186 & 0 & - \\
                thr2\_6000.bench & 30135 & 30125 & 30124 & 8.30 & 11.32 & 1 & 0 & - & - & - \\
                trVlog\_20.bench & 9849 & 6207 & 5865 & 1.82 & 1.33 & 156 & 158 & 26 & 0 & - \\
                PvS\_5\_4-aigmiter.bench & 1628 & 1217 & 1203 & 0.28 & 0.18 & 10 & 2 & 0 & - & - \\
                voter\_size\_2023.bench & 37345 & 21645 & 21440 & 3.97 & 18.42 & 109 & 34 & 8 & 0 & - \\
                inter...convergence.bench & 2530 & 2142 & 1864 & 0.60 & 0.33 & 25 & 23 & 19 & 20 & 20 \\
                b22\_C.bench & 18458 & 14247 & 13936 & 2.32 & 7.22 & 146 & 122 & 23 & 10 & 3 \\
                mult64.bench & 32959 & 17741 & 15882 & 5.13 & 12.20 & 959 & 900 & 0 & - & - \\
                \bottomrule
            \end{tabular}
        \end{center}
        \caption{Comparison of circuit sizes and running times between \texttt{resyn2} and \texttt{resyn2+Simplifier}. The \texttt{r} column corresponds to the size after applying \texttt{resyn2}, and the \texttt{r+s} column corresponds to the size after applying \texttt{resyn2} followed by \texttt{Simplifier}. The iterations column represents the number of subcircuits that were simplified in each iteration of~\texttt{Simplifier}. The times for \texttt{resyn2} and \texttt{Simplifier} are also shown, with the time for \texttt{Simplifier} representing the sum of the time over all iterations.}
        \label{table:iterationStat}
    \end{table}

\subsubsection{BENCH Circuits}
	The \texttt{ABC} framework is primarily optimized for circuits in the $\{\land, \neg\}$ basis.
	As~a result, for the BENCH circuits, shown in Table~\ref{table:statBench}, we report only the percentage of gates removed by our tool.
    (Note that some of~the public datasets are available in~the AIG basis only.)
	Experiments demonstrate that our simplification tool reduces the size of BENCH circuits by an average of 30\%. All circuits used in our experiments are either publicly available or derived from natural encodings of various problems. The results indicate that our tool can efficiently simplify many practical circuits.

\begin{table}[!ht]
	\begin{center}
		\begin{tabular}{lrr}
			\toprule
			Class & Count & Improvement (\%) \\
			\midrule
			Pigeonhole principle & 68 & 46.3 \\
			Even colouring principle & 118 & 14.3 \\
			Clique search problem & 94 & 7.8 \\
			Factorization & 159 & 43.7 \\
			Miter, summation & 20 & 0 \\
			Miter, threshold & 20 & 0.5 \\
			Miter, multiplication & 57 & 69.7 \\
			Miter, sorting & 24 & 69.1 \\
			ISCAS85 & 11 & 36.8 \\
			ITC'99 & 41 & 18.8 \\
			Bristol Fashion MPC Circuits & 22 & 18.1 \\
			\bottomrule
		\end{tabular}
	\end{center}
	\caption{Simplification statistics for various classes of circuits in the BENCH basis.}
	\label{table:statBench}
\end{table}


\subsection{Verifying Simplification}
To~verify that our tool simplifies a~circuit correctly (that~is, outputs
a~circuit computing the same function), we~employed the Combinational Equivalence Checking algorithm~\cite{4110135} implemented within the framework \texttt{ABC} via the \texttt{cec} command.
Using these command, we~verified 
that our tool
simplifies correctly all the circuits
used for the experimental evaluation.

\section{Conclusion}
We developed \texttt{Simplifier}, a new open-source tool for simplifying Boolean circuits.
We have shown that our tool is helpful for low-effort optimization in conjunction with the \texttt{ABC} framework: we evaluated the tool on various Boolean circuits, including industrial and hand-crafted examples, in two popular formats: AIG and BENCH. After applying the state-of-the-art \texttt{ABC} framework, our tool achieved an additional 4\% average reduction in~size for AIG circuits. For BENCH circuits, the tool reduced their size by~an~average of 30\%.

\section*{Acknowledgments}
We~are thankful to~the anonymous reviewers whose numerous comments
helped~us to~improve not only the writing, but also the tool itself.

\bibliographystyle{plain}
\bibliography{references}

\end{document}